\documentclass[conference,10pt,letterpaper]{IEEEtran}

\usepackage{amsmath,amssymb,amsthm}
\usepackage{algorithm}
\usepackage{algpseudocode}
\usepackage{graphicx}
\usepackage{booktabs}
\usepackage{multirow}
\usepackage{xcolor}
\usepackage[hidelinks]{hyperref}
\usepackage{enumitem}
\graphicspath{{figures/}}

\newtheorem{theorem}{Theorem}

\newtheorem{corollary}[theorem]{Corollary}
\newtheorem{definition}{Definition}
\newtheorem{proposition}[theorem]{Proposition}

\hypersetup{
  pdftitle={BitPCE: Batched Multi-Dimensional Policy Resolution with GPU Bitmap Algebra},
  pdfsubject={Batched policy resolution with GPU bitmap algebra},
  pdfkeywords={GPU computing, set intersection, bitmap index, policy resolution, CUDA, parallel algorithms}
}
\begin{document}

\title{\texorpdfstring{BitPCE: Batched Multi-Dimensional Policy\\Resolution with GPU Bitmap Algebra}{BitPCE: Batched Multi-Dimensional Policy Resolution with GPU Bitmap Algebra}}

\author{
\IEEEauthorblockN{Shengwei Wang}
\IEEEauthorblockA{Illumio}
}
\hypersetup{pdfauthor={Shengwei Wang}}

\maketitle
\thispagestyle{plain}

\begin{abstract}
Policy controllers must determine which workloads match rules spanning
several label dimensions. Hash sets, compressed bitmaps, and GPU
set-processing primitives address the underlying set operations.
However, intermediate results, repeated scope evaluations, dependent
policy updates, and host transfers can dominate the cost of a
heterogeneous batch. We study this problem as batched
intersection-of-unions (IoU).

Our design exploits independent scope--word tasks, reusable endpoint
expressions, and policy dependencies. The challenges are preserving
wildcard, empty-group, and exclusion semantics, recognizing equal
scopes, and retaining unchanged recipients affected by source edits.
BitPCE addresses them with register-resident GPU evaluation, canonical
scope compilation, and dependency-complete candidate selection under
a stated rendering contract. IP-only changes reuse cached memberships.
We establish correctness and word-work bounds.

On a four-vCPU A10G host, resident GPU execution is 3.46--36.60$\times$
faster than the best tested native CPU path, including CRoaring, and
1.14--2.22$\times$ faster than an adapted operand-parallel GPU reduction.
Gains narrow at the host-output boundary: on an eight-vCPU host,
including upload and compact identifier output, resolution takes
16.3\,ms versus 19.3\,ms for CPU at 100K~objects and 500~scopes.
Sparse labels and small batches can favor CPU execution. Scope reuse,
candidate selection, and incremental transfers are measured as
components; complete update-to-enforcement latency remains unevaluated.
\end{abstract}

\begin{IEEEkeywords}
GPU computing, set intersection, bitmap index, policy resolution,
CUDA, parallel algorithms
\end{IEEEkeywords}

\section{Introduction}
\label{sec:intro}

Network policy controllers translate application intent into the
workloads to which a rule applies. For example, ``allow production
web servers to reach production databases on TCP port 5432'' requires
finding both endpoint sets from role and environment labels. Workload
arrivals, label edits, and policy changes can require these sets to
be refreshed across many rules. This paper addresses that bulk
membership-resolution step, which also appears in access control
and categorical content filtering.

The core expression is \emph{intersection-of-unions} (IoU): a scope
accepts any listed label within a dimension and requires a match in
every constrained dimension. For groups of workload sets
$G_i=\{S_{i,1},S_{i,2},\ldots\}$,
\begin{equation}
\label{eq:iou}
\text{IoU}(G_1, \ldots, G_K)
= \bigcap_{i=1}^{K} \biggl(\bigcup_{S \in G_i} S\biggr).
\end{equation}
The desired output is the matching workload set for each scope;
rendering and delivering the resulting policies add further work.

\paragraph{How the problem is addressed today}
Industry uses label selectors to express policy: Kubernetes delegates
enforcement to network plugins~\cite{kubernetes_networkpolicy}, while
Cilium uses label-derived identities to separate security from IP
addressing~\cite{cilium_identity}. For explicit membership sets, Redis
provides unions and intersections~\cite{redis}; selective hash filtering
can start from a small candidate set. Dense bitmaps exploit word-level
parallelism, and Roaring adapts storage to density~\cite{roaring}.
Academic work accelerates GPU set intersection~\cite{theodorakis2022},
bitmap reductions~\cite{nelson2020}, and reusable query
processing~\cite{gpuwah_tran}.

\paragraph{The remaining difficulty}
However, a fast primitive alone does not determine the best execution
plan for a batch of heterogeneous scopes. Hash unions construct
intermediate sets; selective filtering trades those unions for
irregular probes. Dense bitmaps scan words even for sparse labels. Operand-parallel
GPU reductions avoid global intermediates but require coordination;
staged evaluation instead materializes group results.
Launches, index upload, and host-readable output can outweigh GPU
computation. Repeated scopes and dependencies between policy recipients
create further work beyond an isolated intersection. The question is
therefore how to exploit a whole batch's structure, preserve its
semantics, and establish when the benefit survives the required
input/output costs.

\paragraph{Insights, challenges, and approach}
Building on these techniques, three observations guide BitPCE:
\begin{enumerate}[leftmargin=*]
\item \emph{Keep the expression factored and parallelize its words.}
Factoring is standard set algebra; the opportunity is to evaluate
independent scope--word pairs without storing each group union.
The challenge is to fuse heterogeneous groups while preserving
wildcards, empty groups, exclusions, and partial-word semantics.
We establish the bitmap equivalence and work bounds
(Section~\ref{sec:bitmap}) and implement register-resident fused
CUDA evaluation with early termination (Section~\ref{sec:gpu_kernels}).
\item \emph{Reuse membership expressions across rules.}
Rules with different services can share endpoint scopes. The challenge
is to recognize equal scopes without losing group boundaries or
exclusions, and to distinguish shared output from its expansion.
A CPU compiler produces canonical scopes and a reference map with a
correctness argument (Section~\ref{sec:scope_dedup}).
\item \emph{Follow policy dependencies when membership changes.}
An unchanged database recipient still needs an update when its allowed
web-source set changes. The challenge is complete candidate selection
across source, recipient, and service edits. We derive a candidate
set under an explicit rendering contract and separate IP-only changes
from label changes (Section~\ref{sec:integration}).
\end{enumerate}

The evaluation compares matched native CPU, compressed-bitmap, and
Redis paths, tests fusion and scope reuse, and measures transfer costs,
including cases favoring CPU (Section~\ref{sec:eval}).

\section{Existing Approaches and Problem Model}
\label{sec:formulation}

\subsection{Policy Systems and Prior Set Processing}
\label{sec:related}

\paragraph{Industrial policy systems}
Kubernetes delegates label-selector enforcement to network
plugins~\cite{kubernetes_networkpolicy}; Cilium shares label-derived
identities across pods, avoiding some address-dependent rule
expansion~\cite{cilium_identity}. Azure VFP compiles layered policy
into cached flow actions and revalidates flows after policy-generation
changes~\cite{vfp}. These systems already reduce repeated enforcement
work. BitPCE targets the different interface of producing explicit
workload memberships in bulk; its IP-update contract does not replace
identity-based or flow-cached enforcement.

\paragraph{CPU sets and compressed indexes}
Redis supplies set algebra~\cite{redis}; selective filtering probes
only candidate objects. Roaring combines array, bitmap, and run
containers with optimized native operations~\cite{roaring,roaring_impl}.
Tree-encoded bitmaps also compress dense or weakly clustered inputs
and support fast intersection~\cite{tree_encoded}; compression is not
only a sparse-data optimization. Dense words trade that adaptability
for regular access. We test native CRoaring as well as PyRoaring;
these comparisons do not cover every compressed index.

\paragraph{GPU set and query processing}
BATMAP uses a compact layout for GPU set intersections~\cite{batmap};
Bellas and Gounaris~\cite{theodorakis2022}, G$^2$Miner~\cite{g2miner},
and ModernGPU~\cite{moderngpu} provide related set-processing methods.
Nelson et al.~\cite{nelson2019,nelson2020} compare word-, operand-,
and hybrid-parallel bitmap reductions. Their ideal-hybrid mapping
coalesces input reads and reduces operands in shared memory with
one final global write. Thus, eliminating global intermediates is
already established. Tran et al.~\cite{gpuwah_tran} improve WAH query
processing through metadata and allocation reuse, and consider
retaining decompressed columns. GPU-WAH~\cite{gpuwah} addresses bitmap
compression; DogQC~\cite{dogqc} fuses query pipelines and addresses
divergence. BitPCE specializes execution to heterogeneous IoU scopes:
one thread accumulates a word across groups without inter-thread
reduction. Section~\ref{sec:ablation} tests this choice against both
staged execution and an IoU adaptation of ideal-hybrid reduction.

\paragraph{Reuse and policy updates}
Delta-net incrementally verifies changed network behavior using
equivalence atoms~\cite{delta_net}; incrementality and shared behavior
are established ideas. Our scope-sharing proof preserves categorical
groups and exclusions, while the candidate proof covers recipients
whose rendered policy changes despite unchanged membership.
GPU Boolean logic~\cite{boolmatrix}, packet classification
~\cite{abv_gpu}, PacketShader~\cite{packetshader}, and GASPP~\cite{gaspp}
address related computation with different output contracts.
We evaluate the resolver and update components, without claiming
superiority over complete query engines or deployed controllers.

\subsection{Resolution Semantics}
We model the categorical membership step explicitly below. Full
policy-language translation, packet processing, and conflict resolution
between rendered rules are outside this model.

\begin{definition}[Label Space]
Let $\mathcal{W} = \{w_1, \ldots, w_N\}$ be a set of \emph{objects}
(e.g., workloads).  Let $\mathcal{D} = \{D_1, \ldots, D_K\}$ be $K$
\emph{label dimensions}, each a finite set of dimension-qualified labels (the $D_k$ are disjoint), with total label
set $\mathcal{L} = \bigcup_{k=1}^{K} D_k$ and $|\mathcal{L}| = L$.
\end{definition}

\begin{definition}[Membership Function]
A \emph{membership function} $\mu : \mathcal{W} \to 2^{\mathcal{L}}$
assigns each object a set of labels.  We assume at most one label per
dimension: $|\mu(w) \cap D_k| \leq 1$ for all $w, k$.
\end{definition}

\begin{definition}[Rule and Scope]
A \emph{rule} $r \in \mathcal{R}$ has a \emph{scope}
$S(r) = (S_1(r), \ldots, S_K(r))$ with
$S_k(r)\in 2^{D_k}\cup\{\star\}$, where $\star$ is a wildcard
marker distinct from every label set. There are three cases:
\begin{itemize}[leftmargin=*]
\item \emph{Wildcard} ($S_k=\star$): no inclusion constraint is
  imposed on dimension~$k$, including when its attribute is missing.
\item \emph{Non-empty group} ($\emptyset\neq S_k\subseteq D_k$):
  the object must carry at least one label in~$S_k$. Objects missing
  dimension~$k$ fail this constraint.
\item \emph{Explicitly empty group} ($S_k=\emptyset$):
  no object matches the scope.
\end{itemize}
A rule also has an \emph{exclusion family}
$E(r)=(E_1(r),\ldots,E_K(r))$ with $E_k(r)\subseteq D_k$;
$E_k(r)=\emptyset$ when no exclusions are specified for dimension~$k$.
Exclusions apply independently of wildcard inclusion constraints.
\end{definition}

\begin{definition}[Match Predicate]
Let $\mathcal{C}(r)=\{k\in\{1,\ldots,K\}:S_k(r)\neq\star\}$
be the set of \emph{constrained} dimensions for rule~$r$.
Object $w$ \emph{matches} rule $r$ iff:
\begin{equation}
\label{eq:match}
\begin{split}
\text{Match}(r, w) ={} &
\bigwedge_{k \in \mathcal{C}(r)}
\Bigl(
  \exists\, \ell \!\in\! S_k(r) : \ell \!\in\! \mu(w)
\Bigr) \\
& \;\land\;
\bigwedge_{k=1}^{K}
\Bigl(
  \forall\, \ell \in E_k(r) : \ell \notin \mu(w)
\Bigr).
\end{split}
\end{equation}
Empty conjunctions evaluate to true. Wildcard dimensions
($k\notin\mathcal{C}(r)$) impose no inclusion constraint, even when
the attribute is missing; exclusions still apply in every dimension.
An explicitly empty group ($S_k=\emptyset$, $k\in\mathcal{C}(r)$)
makes the match predicate false.
\end{definition}

\begin{definition}[Resolution Problem]
Given $\mathcal{W}$, $\mu$, and $\mathcal{R}$, compute
$\text{Match}(r, w)$ for all $(r, w) \in \mathcal{R} \times \mathcal{W}$.
\end{definition}

\subsection{Selective Hash-Based Resolution}

For each label $\ell$, define its workload set
$B_\ell=\{w\in\mathcal{W}:\ell\in\mu(w)\}$.
For each constrained dimension $k$, the group
$G_k=\{B_\ell:\ell\in S_k(r)\}$ is a collection of workload sets.
The algorithm unions the sets within each group and intersects the
resulting unions across dimensions. A \emph{selective} variant picks
the group with minimum summed cardinality, forms its union as the
candidate set, and probes remaining groups for membership.

Algorithm~\ref{alg:hash_iou} resolves the inclusion groups, reindexed
as $G_1,\ldots,G_h$ with $h=|\mathcal{C}(r)|$; exclusion filtering
is applied afterward to obtain the full match set.

\begin{algorithm}[t]
\caption{Selective Hash-Based IoU (CPU)}
\label{alg:hash_iou}
\begin{algorithmic}[1]
\Require Workload universe $\mathcal{W}$; inclusion groups $G_1,\ldots,G_h$
\Ensure Result set $R=\text{IoU}(G_1,\ldots,G_h)$
\If{$h=0$} \Return $\mathcal{W}$ \EndIf
\State $i^*\gets\arg\min_{1\leq i\leq h}\sum_{S\in G_i}|S|$
\State $R_0\gets\bigcup_{S\in G_{i^*}}S$; $R\gets R_0$
\If{$R=\emptyset$} \Return $\emptyset$ \EndIf
\For{each $i\in\{1,\ldots,h\}\setminus\{i^*\}$}
  \For{each $w\in R$ using erase-safe iteration}
    \If{$\nexists\,S\in G_i:w\in S$}
      $R\gets R\setminus\{w\}$
    \EndIf
  \EndFor
  \If{$R=\emptyset$} \Return $\emptyset$ \EndIf
\EndFor
\State \Return $R$
\end{algorithmic}
\end{algorithm}

\begin{proposition}[Selective Hash Filtering Cost]
\label{prop:hash_complexity}
For $h>0$, let $i^*$ be the selected group index and
$R_0=\bigcup_{S\in G_{i^*}}S$ the initial candidate set.
Filtering performs at most $|R_0|\sum_{i\neq i^*}|G_i|$
membership tests, each taking $O(1)$ expected time under the
hash-table model. This bounds membership-probe work, excluding
preparation and exclusion filtering. With constant-time set-size
queries, group selection costs $O(h+\sum_i|G_i|)$; candidate-union
construction costs $O(1+|G_{i^*}|+\sum_{S\in G_{i^*}}|S|)$
expected time.
\end{proposition}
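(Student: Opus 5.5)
The plan is to follow Algorithm~\ref{alg:hash_iou} line by line, charging each primitive operation to one of the three quantities in the statement, under the hash-table model (expected $O(1)$ insert, lookup and erase, and constant-time set-size queries).

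\emph{Filtering bound.} The key invariant is that $R$ only shrinks after line~3: the inner loop either leaves $R$ alone or erases an element, so $R\subseteq R_0$ at every iteration of the outer loop. For a fixed $i\neq i^*$, the inner loop visits each current element of $R$ exactly once. Here the erase-safe iteration assumption is what guarantees no element is revisited or skipped. For each visited $w$, the test ``$\nexists\,S\in G_i:w\in S$'' is evaluated by probing the sets $S\in G_i$ one at a time, stopping at the first hit, so it costs at most $|G_i|$ membership tests. Outer iteration $i$ therefore performs at most $|R|\,|G_i|\le|R_0|\,|G_i|$ tests. Summing over $i\in\{1,\ldots,h\}\setminus\{i^*\}$ gives $|R_0|\sum_{i\neq i^*}|G_i|$. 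The early returns on $R=\emptyset$ only remove terms, so the bound still holds. Each test is a single hash lookup, hence $O(1)$ expected, and each erase is also $O(1)$ expected and occurs at most once per visited element. The statement bounds probe work only, so exclusion filtering and loop overhead are excluded by definition.

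\emph{Group selection.} For each $i$, computing $\sum_{S\in G_i}|S|$ takes $|G_i|$ size queries and additions, plus $O(1)$ overhead per group. This gives $O(h+\sum_i|G_i|)$ in total, and the argmin scan over $h$ values adds $O(h)$.

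\emph{Candidate union.} Building $R_0$ requires allocating an empty set in $O(1)$, then iterating the $|G_{i^*}|$ member sets in $O(1)$ each to start the iteration, and inserting each element of each $S$ in $O(1)$ expected time. Duplicate inserts still cost $O(1)$, so the total is $O(1+|G_{i^*}|+\sum_{S\in G_{i^*}}|S|)$ expected. Linearity of expectation justifies summing the expected per-operation costs, and no independence assumption is needed.

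The only delicate step is the invariant $R\subseteq R_0$ under in-place erasure. The conclusion depends on erase-safe iteration visiting each element at most once, so I would state that as the explicit model assumption rather than derive it. Everything else is direct counting.
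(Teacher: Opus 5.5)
Your argument is correct. The paper states this proposition without a written proof and treats it as direct operation counting over Algorithm~\ref{alg:hash_iou}; your line-by-line charging is exactly that accounting: the candidate set only shrinks ($R\subseteq R_0$), each candidate needs at most $|G_i|$ probes per filtering group, selection uses one size query per member set, and the union uses one expected-$O(1)$ insertion per element.
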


\section{Bitmap-Parallel Formulation}
\label{sec:bitmap}

We first establish bitmap equivalence for the stated semantics,
including empty and wildcard groups, then bound word work.

\subsection{Binary Membership Matrix}

\begin{definition}[Membership Matrix]
For $N\geq1$, the logical membership matrix
$\mathbf{B}^{\mathrm{bit}}\in\{0,1\}^{L\times N}$ has
$\mathbf{B}^{\mathrm{bit}}[\ell,j]=1$ iff $\ell\in\mu(w_{j+1})$,
where $0\leq j<N$. Each row records membership for one label.
Its packed representation $\mathbf{B}$ stores
$W=\lceil N/64\rceil$ unsigned 64-bit words per row.
Bit $b$ of word $\mathbf{B}[\ell,w]$ stores
$\mathbf{B}^{\mathrm{bit}}[\ell,64w+b]$ for
$0\leq b<64$ and $64w+b<N$; padding bits are zero.
Thus $\mathbf{B}[\ell,w]$ denotes a packed word in the formulas below;
object indices are zero-based.
The valid-universe mask is $V[w]=2^{64}-1$ for $0\leq w<W$,
except that when $N\bmod64\neq0$, the final mask is
$V[W-1]=2^{N\bmod64}-1$. Only a partially filled final word
requires a non-trivial mask.
\end{definition}

\subsection{Bitmap Intersection-of-Unions}

\begin{theorem}[Bitmap IoU Equivalence]
\label{thm:bitmap_iou}
For rule~$r$ with scope $S(r)=(S_1,\ldots,S_K)$ and word index
$0\leq w<W$, define
\begin{equation}
\label{eq:bitmap_iou}
\mathbf{m}(r)[w] =
V[w] \;\wedge\;
\bigwedge_{k \in \mathcal{C}(r)}
\biggl(
  \bigvee_{\ell \in S_k}
  \mathbf{B}[\ell, w]
\biggr).
\end{equation}
Operations are bitwise: an empty OR is zero and an empty AND is
$2^{64}-1$. Wildcard dimensions are omitted from the AND reduction.
For every object index $0\leq u<N$, bit $u\bmod64$ of
$\mathbf{m}(r)[\lfloor u/64\rfloor]$ is 1 iff workload $w_{u+1}$
satisfies all inclusion constraints of rule~$r$. Padding bits are zero.
Exclusions are applied separately in Corollary~\ref{cor:exclusion}.
\end{theorem}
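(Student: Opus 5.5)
The plan is to reduce the word-level statement to a bit-level one, since every operation in \eqref{eq:bitmap_iou} is bitwise. Fix an object index $0\le u<N$ and write $w=\lfloor u/64\rfloor$ and $b=u\bmod 64$; note $64w+b=u<N$. Bitwise AND and OR act independently on each bit position, so bit $b$ of $\mathbf{m}(r)[w]$ equals the Boolean expression obtained by replacing every word with its bit $b$. By the membership matrix definition, bit $b$ of $\mathbf{B}[\ell,w]$ is $\mathbf{B}^{\mathrm{bit}}[\ell,u]$, which is $1$ iff $\ell\in\mu(w_{u+1})$. Bit $b$ of $V[w]$ is $1$ because $u<N$. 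If $w<W-1$ or $N\bmod 64=0$, then $V[w]=2^{64}-1$. Otherwise $w=W-1$ and $b=u-64(W-1)<N\bmod 64$, so bit $b$ of $2^{N\bmod64}-1$ is set.

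Next I will match the Boolean expression to the inclusion part of \eqref{eq:match}, treating each dimension by the three scope cases.
\begin{itemize}
\item For $k\in\mathcal{C}(r)$ with $S_k\neq\emptyset$, bit $b$ of $\bigvee_{\ell\in S_k}\mathbf{B}[\ell,w]$ is $1$ iff some $\ell\in S_k$ lies in $\mu(w_{u+1})$. This is exactly the $k$-th inclusion conjunct, and it is false for objects missing dimension~$k$, as the definition requires.
\item For $k\in\mathcal{C}(r)$ with $S_k=\emptyset$, the empty OR is the zero word. The conjunct is therefore $0$, forcing the whole AND to $0$, which agrees with the rule that an explicitly empty group makes Match false.
\item For $k\notin\mathcal{C}(r)$, the dimension is omitted from the reduction, just as it is omitted from the conjunction over $\mathcal{C}(r)$ in \eqref{eq:match}.
\end{itemize}
If $\mathcal{C}(r)=\emptyset$, the empty AND is $2^{64}-1$, so bit $b$ is $1$, matching the convention that empty conjunctions are true. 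Combining these cases, bit $b$ of $\mathbf{m}(r)[w]$ equals $1\wedge\bigwedge_{k\in\mathcal{C}(r)}(\exists \ell\in S_k:\ell\in\mu(w_{u+1}))$, which is the claim.

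For padding, take $w=W-1$ with $N\bmod64\neq0$ and a bit $b$ with $64w+b\ge N$. Then $b\ge N\bmod 64$, so bit $b$ of $V[W-1]$ is zero, and the leading AND with $V[w]$ clears the bit. This holds regardless of the other operands, and in particular when $\mathcal{C}(r)=\emptyset$, where the AND reduction alone would yield all ones. That case is the main subtlety and is precisely why $V$ appears in \eqref{eq:bitmap_iou}. I expect no real obstacle beyond this careful bookkeeping: keeping the empty-OR and empty-AND conventions straight, and verifying the mask arithmetic at the final partial word.
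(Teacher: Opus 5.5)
Your proposal is correct and follows the same route as the paper's proof: fix $u$, reduce the word identity to bit $u \bmod 64$ using the membership-matrix definition, handle nonempty, explicitly empty, and wildcard groups through the empty-OR and empty-AND conventions, and use $V[w]$ to keep valid bits and clear padding. Your explicit check of the final-word mask arithmetic, including the $\mathcal{C}(r)=\emptyset$ case, spells out what the paper states in a single closing line.
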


\begin{proof}
Fix $0\leq u<N$, with word index $w=\lfloor u/64\rfloor$ and
bit position $j=u\bmod64$. By Definition~6, bit $j$ of
$\mathbf{B}[\ell,w]$ is 1 iff $\ell\in\mu(w_{u+1})$.
The OR for dimension $k$ is therefore 1 iff the workload carries
at least one label in $S_k$; the AND requires this for every
constrained dimension. An explicitly empty group contributes zero.
With no constrained dimensions, the empty AND contributes all ones.
Finally, $V[w]$ preserves every valid bit and clears all padding bits,
proving the claim.
\end{proof}

\begin{corollary}[Exclusion via AND-NOT]
\label{cor:exclusion}
For $0\leq w<W$, the full match word is
\[
\mathbf{m}_{\mathrm{full}}(r)[w]
=\mathbf{m}(r)[w]\;\wedge\;
\bigwedge_{k=1}^{K}\;\bigwedge_{\ell\in E_k(r)}
\overline{\mathbf{B}[\ell,w]}.
\]
Complements are over unsigned 64-bit words; an empty AND is all ones.
For $0\leq u<N$, bit $u\bmod64$ of
$\mathbf{m}_{\mathrm{full}}(r)[\lfloor u/64\rfloor]$ equals
$\text{Match}(r,w_{u+1})$. Padding bits remain zero.
\end{corollary}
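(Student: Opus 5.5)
The plan is to reduce the corollary to Theorem~\ref{thm:bitmap_iou} by a bitwise argument at a single fixed position. Fix $0\leq u<N$, and set $w=\lfloor u/64\rfloor$ and $j=u\bmod64$. Every operation in the formula is bitwise, so bit $j$ of $\mathbf{m}_{\mathrm{full}}(r)[w]$ equals the Boolean AND of bit $j$ of $\mathbf{m}(r)[w]$ with bit $j$ of each complemented word $\overline{\mathbf{B}[\ell,w]}$ for $\ell\in E_k(r)$, $1\leq k\leq K$. By Theorem~\ref{thm:bitmap_iou}, bit $j$ of $\mathbf{m}(r)[w]$ is exactly the first conjunct of \eqref{eq:match}, namely the inclusion constraints over $\mathcal{C}(r)$. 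This covers the wildcard and explicitly-empty-group cases with no further work.

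Next I would handle the exclusion part. By the Membership Matrix definition, bit $j$ of $\mathbf{B}[\ell,w]$ is $1$ iff $\ell\in\mu(w_{u+1})$, because $u<N$ is a valid position. Its complement bit is therefore $1$ iff $\ell\notin\mu(w_{u+1})$. Taking the AND over all $k$ and all $\ell\in E_k(r)$ gives exactly $\bigwedge_{k=1}^{K}\forall\,\ell\in E_k(r):\ell\notin\mu(w_{u+1})$. If no exclusions are specified, the empty AND is all ones, which matches the convention that empty conjunctions are true. Combining the two parts shows that bit $j$ equals $\mathrm{Match}(r,w_{u+1})$.

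The only delicate step is the padding bits, since complementing a zero padding bit yields $1$. I would handle it by noting that every padding bit of $\mathbf{m}(r)[w]$ is already $0$ by Theorem~\ref{thm:bitmap_iou}, because of the factor $V[w]$. ANDing a $0$ with anything leaves it $0$, so the padding bits of $\mathbf{m}_{\mathrm{full}}(r)[w]$ stay zero whatever values the complemented words carry there. This explains why the corollary can use unmasked complements: the mask applied once in $\mathbf{m}(r)$ suffices.
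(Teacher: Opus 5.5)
Your proposal is correct and follows essentially the same route as the paper. Both take the inclusion conjunct and zero padding from Theorem~\ref{thm:bitmap_iou}, read each complemented label bit as non-membership (over all $K$ dimensions, including wildcard ones), use the empty-AND convention when there are no exclusions, and note that further ANDs cannot set padding bits already cleared by $V[w]$.
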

\begin{proof}
For a valid workload bit, each complemented label word is 1 iff
that workload does not carry the excluded label. The AND over all
exclusions supplies Definition~4's exclusion condition, including
wildcard dimensions. Theorem~\ref{thm:bitmap_iou} supplies the
inclusion condition and zero padding; further AND operations cannot
set a cleared bit. Empty exclusion sets leave the inclusion result
unchanged.
\end{proof}

\subsection{Label Group Elimination}

\begin{theorem}[Label Group Elimination]
\label{thm:group_elim}
Let $h=|\mathcal{C}(r)|$ and, for each constrained dimension $k$,
let $\mathcal{G}_k=S_k(r)$ and $g_k=|\mathcal{G}_k|$.
Both hash-based and bitmap approaches can evaluate
$\bigcap_{k\in\mathcal{C}(r)}\bigcup_{\ell\in\mathcal{G}_k}B_\ell$
without enumerating the $\prod_{k\in\mathcal{C}(r)}g_k$
Cartesian-product combinations. The empty intersection denotes
$\mathcal{W}$. Below, sums over $k$ range over constrained dimensions.

Materializing one group union takes $O((1+g_k)W)$ word work.
Uncached inclusion evaluation takes $O((1+h+\sum_k g_k)W)$,
including initialization and output. When $h\geq1$ and every group
is nonempty, this simplifies to $O(W\sum_k g_k)$.
If valid group unions $\hat{\mathbf{B}}[\mathcal{G}_k]$ are already
available for the current membership snapshot, combining them takes
$O((1+h)W)$ work, independent of group sizes. This excludes cache
construction or refresh. Exclusions add $O(eW)$ work, where
$e=\sum_{k=1}^{K}|E_k(r)|$.
\end{theorem}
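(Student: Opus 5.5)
The plan is to argue directly from the factored form, treating the hash and bitmap cases separately for the "no Cartesian enumeration" claim and then counting word operations for the bitmap bounds. For the hash side, Algorithm~\ref{alg:hash_iou} already evaluates the factored expression. It forms one union $R_0$ and then filters it against each remaining group by testing whether some $S\in G_i$ contains $w$. Proposition~\ref{prop:hash_complexity} bounds this work by $|R_0|\sum_{i\neq i^*}|G_i|$ probes plus the union construction. That bound is additive in the $g_k$ rather than multiplicative, so no product combination is ever formed. For the bitmap side, Theorem~\ref{thm:bitmap_iou} shows that \eqref{eq:bitmap_iou} computes exactly the inclusion set, with the convention that an empty AND gives $V[w]$, i.e.\ $\mathcal{W}$. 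Since \eqref{eq:bitmap_iou} is an AND of per-dimension ORs, evaluating it also never enumerates the $\prod_k g_k$ tuples. Correctness is thus inherited from the earlier results, and what remains is the accounting.

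Next I will count word work. For one group union $\hat{\mathbf{B}}[\mathcal{G}_k]$, I initialize $W$ zero words and then OR in each of the $g_k$ rows word by word. This costs $W+g_kW=O((1+g_k)W)$, and the leading $1$ covers the empty group $g_k=0$, whose union is the all-zero result. For uncached evaluation, I will fix a word index $w$ and do the following:
\begin{itemize}
\item initialize an accumulator to $V[w]$;
\item for each constrained $k$, compute the OR of $g_k$ words (initialized to zero) and AND it into the accumulator;
\item write the accumulator out.
\end{itemize}
Per word this is $O(1+\sum_k(1+g_k))=O(1+h+\sum_k g_k)$, and multiplying by $W$ gives the stated bound. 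If $h\ge1$ and every $g_k\ge1$, then $1+h\le 2\sum_k g_k$, so the bound simplifies to $O(W\sum_k g_k)$.

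For the cached case, I assume each $\hat{\mathbf{B}}[\mathcal{G}_k]$ is valid for the current snapshot. Each word then needs one initialization, $h$ ANDs, and one output, giving $O((1+h)W)$ independent of the $g_k$; cache construction is excluded by hypothesis. For exclusions, Corollary~\ref{cor:exclusion} adds one AND-NOT per excluded label per word, which is $O(eW)$.

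The main obstacle is being careful about edge cases so the constants in the bounds are honest. With $h=0$ the result is $V$, and the $1$ term pays for initialization and output. An empty group forces a zero word; early termination may only reduce work, so the upper bounds still hold. Correctness under both the cached and uncached paths relies on Theorem~\ref{thm:bitmap_iou}, including its handling of padding through $V$. I will also note that the whole analysis is a word-operation count: each row access is $O(1)$ per word and carries no hidden dependence on $N$ beyond $W$.
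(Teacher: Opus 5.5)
Your proposal is correct and follows essentially the same argument as the paper. Evaluating group unions directly avoids the distributive expansion in either representation, and the bounds come from counting $g_kW$ ORs per group, $hW$ ANDs, $O(W)$ initialization/output/masking, and $eW$ AND-NOTs, with Theorem~\ref{thm:bitmap_iou} and Corollary~\ref{cor:exclusion} covering empty, wildcard, and excluded labels. The only cosmetic difference is that you count the uncached case per word with a fused accumulator, whereas the paper counts materialized zero-initialized unions followed by combination; both give the same $O((1+h+\sum_k g_k)W)$ bound.
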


\begin{proof}
For each workload, membership in the factored expression requires
one matching label from every constrained group. Distributivity
makes this equivalent to enumerating label choices and unioning
their intersections; direct group-union evaluation avoids that
enumeration in either representation.
A zero-initialized bitmap union performs $g_kW$ OR operations and
$O(W)$ initialization/output work. Combining $h$ group results
adds $hW$ AND operations, with $O(W)$ output and masking work.
Summing yields the stated uncached bound; if all $g_k\geq1$ and
$h\geq1$, then $1+h=O(\sum_k g_k)$.
Valid cached unions remove the per-label OR work, leaving
$O((1+h)W)$ combination work. Each exclusion contributes at most
$W$ AND-NOT operations. Theorem~\ref{thm:bitmap_iou} covers empty
and wildcard groups, and Corollary~\ref{cor:exclusion} covers exclusions.
\end{proof}

A cached group union is reusable while its label set, referenced
membership rows, and workload universe/index mapping remain unchanged.
Otherwise it must be refreshed or revalidated, even if the rule's
label list is unchanged. This optional group-union cache is distinct
from the cross-rule scope reuse in Section~\ref{sec:scope_dedup}.
In the GPU cache variant, the CPU identifies recurring groups and
compiles their references; GPU OR operations build the cached rows.

\begin{corollary}[Work Comparison]
\label{cor:group_work}
For the constrained groups in Theorem~\ref{thm:group_elim}, the
zero-initialized bitmap plan performs at most $W\sum_k g_k$
word-level OR operations and $hW$ group-combination AND operations,
plus $O((1+h)W)$ initialization/output work. Early termination can
reduce these counts. Each word operation processes 64 membership bits.
Hash-based group-union construction makes
$\sum_k\sum_{\ell\in\mathcal{G}_k}|B_\ell|$ insertion attempts,
plus group/label traversal overhead. Section~\ref{sec:eval} measures
how representation, density, and batching affect runtime.
\end{corollary}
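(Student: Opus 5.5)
The plan is to count operations directly from the zero-initialized plan used in the proof of Theorem~\ref{thm:group_elim}, and to treat the hash bound as a separate count over insertions. For the bitmap side, fix a rule and its constrained dimensions $\mathcal{C}(r)$ with $h=|\mathcal{C}(r)|$.

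\emph{Group unions.} For each $k\in\mathcal{C}(r)$, the plan initializes a $W$-word accumulator to zero. It then ORs in each of the $g_k$ rows $\mathbf{B}[\ell,\cdot]$, $\ell\in\mathcal{G}_k$, one word at a time. This costs exactly $g_kW$ OR operations. Summing over constrained dimensions gives $W\sum_k g_k$. An explicitly empty group ($g_k=0$) contributes zero ORs and leaves the zero accumulator, consistent with Theorem~\ref{thm:bitmap_iou}.

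\emph{Combination.} The result accumulator is initialized to $V[w]$, or to all ones followed by masking. Each of the $h$ group results is ANDed into it, costing $hW$ AND operations. Initialization of the $h$ group accumulators and the result, masking, and output writes each touch $O(1)$ words per accumulator per word index. This gives $O((1+h)W)$, matching the accounting already done in the proof of Theorem~\ref{thm:group_elim}.

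\emph{Early termination.} These counts are upper bounds because early termination only skips operations. If a partial AND becomes zero for word $w$, the remaining ORs and ANDs for that word cannot change the result, so skipping them does not alter correctness. Each word operation acts on one packed 64-bit word, so by the Membership Matrix definition it processes 64 membership bits.

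\emph{Hash side.} Building $\bigcup_{\ell\in\mathcal{G}_k}B_\ell$ by inserting each element of each constituent set makes exactly $\sum_{\ell\in\mathcal{G}_k}|B_\ell|$ insertion attempts. Duplicates are still attempted even though they do not enlarge the set. Summing over $k$ yields the stated double sum. Iterating over groups and labels adds $O(h+\sum_k g_k)$ traversal overhead, which is the overhead mentioned in the statement.

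The main obstacle is accounting, not mathematics. The bitmap count must cover all constrained groups with no reuse of cached unions, because the cached case is the $O((1+h)W)$ bound of Theorem~\ref{thm:group_elim} and is a different plan. I would also state explicitly that the statement makes no runtime comparison: the two counts are in different units (word operations versus expected-$O(1)$ hash insertions), and their relation is deferred to measurement.
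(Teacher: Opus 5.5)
Your proposal is correct and follows essentially the paper's route: the paper gives no separate proof and relies on the counts in the proof of Theorem~\ref{thm:group_elim}. Those are $g_kW$ ORs plus $O(W)$ initialization/output per zero-initialized group union, then $hW$ ANDs plus $O(W)$ masking/output for the combination; early termination only removes operations, and the hash figure is a direct tally of insertions, as you treat it. One small nit: because each object carries at most one label per dimension, the sets $B_\ell$ for $\ell\in\mathcal{G}_k$ are pairwise disjoint, so your caveat about duplicate insertion attempts is vacuous (the count equals the union's size), though harmless.
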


\subsection{Complexity Analysis}

\begin{theorem}[Bitmap IoU Complexity]
\label{thm:bitmap_complexity}
For $K\geq1$ dimensions, at most $L_{\max}\geq1$ inclusion labels
per dimension, and $e=\sum_{k=1}^{K}|E_k(r)|=O(KL_{\max})$
exclusion labels, resolving one rule scope over $W$ words requires
$O(KL_{\max}W)$ word work. Evaluating $R$ independent scopes
requires $O(RKL_{\max}W)$ word work.
\end{theorem}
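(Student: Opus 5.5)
The plan is to specialize the uncached bound of Theorem~\ref{thm:group_elim} and then sum over independent scopes. Fix one rule $r$ and let $h=|\mathcal{C}(r)|\le K$. For each constrained dimension $k$, the group size satisfies $g_k=|S_k(r)|\le L_{\max}$. This holds because $S_k(r)\subseteq D_k$ and $L_{\max}$ bounds the inclusion labels per dimension. Hence $\sum_k g_k\le hL_{\max}\le KL_{\max}$.

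Theorem~\ref{thm:group_elim} gives uncached inclusion work $O((1+h+\sum_k g_k)W)$. Since $K\ge1$ and $L_{\max}\ge1$, we have $1\le KL_{\max}$ and $h\le K\le KL_{\max}$. The inclusion term is therefore $O(3KL_{\max}W)=O(KL_{\max}W)$.

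This route deliberately avoids the simplified form $O(W\sum_k g_k)$. That form needs $h\ge1$ and nonempty groups, and it would fail for all-wildcard scopes or explicitly empty groups. The constant $1$ and the $hW$ combination term are absorbed through $K,L_{\max}\ge1$ instead.

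Next, exclusions add $O(eW)$ by Theorem~\ref{thm:group_elim} (equivalently, one AND-NOT per label word as in Corollary~\ref{cor:exclusion}). The hypothesis $e=O(KL_{\max})$ makes this term $O(KL_{\max}W)$. Adding the two terms gives $O(KL_{\max}W)$ for a full match word array per scope. Correctness of what is computed is already supplied by Theorem~\ref{thm:bitmap_iou} and Corollary~\ref{cor:exclusion}, so only counting remains.

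For $R$ scopes, each is evaluated independently with the same per-scope constant. Summing $R$ copies yields $O(RKL_{\max}W)$.

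The only delicate step is making sure the hidden constant is uniform across scopes, so that the sum is legitimate. It is uniform because the bound of Theorem~\ref{thm:group_elim} has a fixed constant and every per-rule parameter ($h$, $g_k$, $e$) is dominated by the global quantities $K$ and $L_{\max}$. For $e$, I would read the hypothesis $e=O(KL_{\max})$ as a uniform bound over all rules.

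I would also note that the bound counts word work only. It excludes any cache construction, and no cached unions are assumed.
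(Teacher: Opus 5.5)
Your proposal is correct and follows essentially the paper's argument. The paper recounts the fused kernel's operations directly: at most $g_kW$ ORs and $W$ ANDs per constrained group, hence $K(L_{\max}+1)W$, plus $O(W)$ overhead and $eW$ AND-NOTs, giving $O((1+KL_{\max}+e)W)$. You instead import the same count from the uncached bound of Theorem~\ref{thm:group_elim}, and your explicit absorption of the $1+h$ term via $K,L_{\max}\ge1$ and your reading of $e=O(KL_{\max})$ as uniform over rules spell out what the paper leaves implicit when it sums over $R$ scopes.
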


\begin{proof}
In the zero-initialized kernel, a constrained group with $g_k$
labels contributes at most $g_kW$ OR operations and $W$ AND
operations. Since $h\leq K$ and $g_k\leq L_{\max}$, inclusion
uses at most $K(L_{\max}+1)W$ such operations. Initialization,
masking, and output add $O(W)$ work; exclusions add at most $eW$
AND-NOT operations. The general bound is therefore
$O((1+KL_{\max}+e)W)$. With $K,L_{\max}\geq1$ and
$e=O(KL_{\max})$, this simplifies to $O(KL_{\max}W)$.
Wildcard dimensions add no groups; empty groups and early termination
cannot increase the bound. Work sums over $R$ independent scopes.
\end{proof}

\begin{table}[t]
\centering
\caption{Per-scope inclusion comparison. Hash costs are expected under the constant-time hash-table model. For $h>0$, hash filtering counts membership probes only; group selection and candidate-union construction are additive (Proposition~\ref{prop:hash_complexity}). The hash union bound assumes disjoint within-dimension workload sets (Definition~2). Exclusions are additional in both approaches.}
\label{tab:complexity}
\setlength{\tabcolsep}{5pt}
\begin{tabular}{lcc}
\toprule
& \textbf{Hash-Based} & \textbf{Bitmap} \\
\midrule
Filtering / evaluation & $O(|R_0| \sum_{i\ne i^*}|G_i|)$ & $O(KL_{\max}W)$ \\
Access pattern & Hash probes & Selected-row reads \\
Access locality & Data-dependent & Contiguous row words \\
Group union & $O(N+g_k)$ & $O((1+g_k)W)$ \\
Parallelism & Independent candidates & $W$ independent words \\
\bottomrule
\end{tabular}
\end{table}

\section{GPU Kernel Design}
\label{sec:gpu_kernels}

CPU compilation exposes shared scopes; GPU threads evaluate their
words in a fused pass. Expansion remains a separate output cost.

\paragraph{Execution placement}
The CPU prepares inputs and launches CUDA operations. The GPU applies
bitmap patches, evaluates scopes, and optionally expands results,
compacts identifiers, and computes notification candidates. CUDA copies
connect these stages to CPU rendering and delivery; IP-only changes
follow a CPU-only path (Section~\ref{sec:ip_churn}).

\subsection{CPU Preparation and GPU Inputs}

The CPU packs $\mathbf{B}$ in label-major order ($8LW$ bytes) and
compiles flat metadata: compressed-row offsets $o^R$ (scope-to-group)
and $o^G$ (group-to-label), label IDs $\lambda$, exclusion offsets
$o^E$, and exclusion IDs $\epsilon$. These arrays support heterogeneous
groups. CUDA copies, invoked through CuPy, upload the matrix and
metadata over PCIe for resident GPU evaluation.

\subsection{Cross-Rule Scope Deduplication}
\label{sec:scope_dedup}
A directional policy rule has provider and consumer scopes and a
separate service descriptor (e.g., protocol and port). Let $M$ be the
number of directional rules and $R=2M$ their raw scope occurrences.
Several rules may share the same scopes while differing only in
service. We evaluate each of the $U\leq R$ distinct scopes once and
retain a rule-side-to-scope map $\pi$. This is \emph{cross-rule reuse},
distinct from caching label-group unions and from superseding pending
snapshots with newer state.

\begin{algorithm}[t]
\caption{CPU Scope Compilation and Reference Map}
\label{alg:dedup}
\begin{algorithmic}[1]
\Require Raw scopes $s_0,\ldots,s_{R-1}$ with dimension-qualified labels
\Ensure Unique scope array $Q$ and reference map $\pi$
\State $H\gets$ empty dictionary; $Q\gets[\,]$
\For{$i=0,\ldots,R-1$}
 \State Sort/deduplicate labels inside each OR-group of $s_i$
 \State Sort/deduplicate the list of groups and exclusion labels
 \State $k\gets(\text{group tuples},\text{exclusion tuple})$
 \If{$k\notin H$} \Comment{Compare complete keys after hashing}
  \State $H[k]\gets |Q|$; append canonical $s_i$ to $Q$
 \EndIf
 \State $\pi[i]\gets H[k]$
\EndFor
\State \Return $(Q,\pi)$
\end{algorithmic}
\end{algorithm}

Group boundaries are preserved: $(\{a\},\{b\})$ represents a
conjunction and differs from the single disjunction $(\{a,b\})$.
No groups (wildcard) differs from an explicitly empty group.
Exclusions belong to the key; services do not, because they do not
participate in scope membership. Hashing accelerates lookup but never
substitutes for full-key equality. This syntactic canonicalization
need not discover every semantically equivalent expression.

\begin{proposition}[Scope-reuse correctness]
For one snapshot, let $\widehat{\mathbf{O}}_u$ be the match bitmap
of unique scope $Q[u]$. Original scope $i$ then has result
$\mathbf{O}_i=\widehat{\mathbf{O}}_{\pi[i]}$. Given $\pi$, evaluating
$U$ unique scopes uses $O(UKL_{\max}W)$ word work under
Theorem~\ref{thm:bitmap_complexity}'s assumptions. Shared result bitmaps
and an $R$-entry 32-bit map occupy $8UW+4R$ bytes; input bitmaps,
scope metadata, and CPU compilation are separate.
\end{proposition}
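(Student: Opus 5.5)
The correctness claim reduces to showing that canonicalization preserves the match predicate. Fix a snapshot $(\mathcal{W},\mu)$ and a raw scope $s_i$ with canonical form $c(s_i)$ produced by lines 3--4 of Algorithm~\ref{alg:dedup}. First I would show $\text{Match}(s_i,w)=\text{Match}(c(s_i),w)$ for every $w$. Equation~\eqref{eq:match} depends on each inclusion group only through its label set: the existential quantifier is insensitive to order and duplicates. It depends on the exclusions only through their union $\bigcup_k E_k$, since the universal quantifiers conjoin over all excluded labels and labels are dimension-qualified. It depends on the collection of constrained groups only as a conjunction, which is commutative and idempotent. Hence sorting or deduplicating labels within a group, sorting or deduplicating the group list, and sorting or deduplicating exclusion labels all leave the predicate unchanged.

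Two boundary facts are preserved by the key. First, a group boundary is never merged, so $(\{a\},\{b\})$ stays a conjunction. Second, the wildcard (absence of a group for dimension $k$) stays distinct from an explicitly empty group: an explicitly empty group survives as an empty tuple, and deduplication of groups can only remove an identical tuple. Dimension qualification lets each group be associated back to its dimension.

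Next I would argue that $H[k]=u$ implies $Q[u]$ has the same complete key, because lookup compares full keys rather than hashes. Therefore $c(s_i)$ and $Q[\pi[i]]$ are syntactically identical, and for every $w$,
\[
\text{Match}(s_i,w)=\text{Match}(c(s_i),w)=\text{Match}(Q[\pi[i]],w).
\]
Applying Corollary~\ref{cor:exclusion} to both sides shows the bitmaps agree on every valid bit and are zero on padding bits. This gives $\mathbf{O}_i=\widehat{\mathbf{O}}_{\pi[i]}$.

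For cost, Theorem~\ref{thm:bitmap_complexity} bounds each of the $U$ independent scope evaluations by $O(KL_{\max}W)$, giving $O(UKL_{\max}W)$ in total. Canonicalization does not increase group sizes or the exclusion count, so its hypotheses still hold. For storage, each of the $U$ result bitmaps uses $W$ 64-bit words ($8W$ bytes), and $\pi$ has $R$ 32-bit entries, for $8UW+4R$ bytes; this assumes $U<2^{32}$, which I would state.

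The main obstacle I expect is the predicate-invariance step, in particular reconciling group deduplication with the dimension structure. If two groups from the same dimension could appear, the semantics would need to be read as a conjunction over groups rather than one group per dimension. I would handle this by arguing that the key records each group's dimension through qualified labels. Equation~\eqref{eq:match} is then a conjunction over the recorded groups, which is invariant under permutation and removal of duplicates.
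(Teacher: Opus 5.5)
Your proposal is correct and follows the paper's proof. The paper likewise argues that sorting and duplicate elimination preserve OR, AND, and exclusion semantics, that equal full keys give equal match predicates on one snapshot, and that work and storage are summed over the $U$ shared rows plus $R$ 32-bit references; you simply spell each step out, including padding-bit agreement via Corollary~\ref{cor:exclusion} and the implicit $U<2^{32}$ assumption. One small refinement: an explicitly empty group's tuple carries no dimension-qualified label, so its dimension cannot be recovered from the key, but this is harmless because any empty group makes the predicate false regardless of dimension.
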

\begin{proof}
Sorting and duplicate elimination preserve OR, AND, and exclusion
semantics. Equal keys give equal match predicates on the same snapshot,
establishing the result identity. Evaluation work sums over $U$ scopes.
Each shared bitmap stores $W$ 64-bit words; $R$ 32-bit references
add $4R$ bytes.
\end{proof}

Python and native C++ implement the compiler over flat metadata.
The CPU rebuilds maps for scope changes and reuses them for label/IP
updates. GPU evaluation returns $U$ shared rows; optional expansion to
$R$ rows costs $O(RW)$ work and $8RW$ bytes. The $R/U$ evaluation
reduction does not guarantee a latency speedup. Benchmarks separate
compilation, shared/expanded results, and CPU execution with scope reuse.

\subsection{Per-Rule Kernel}

Each thread evaluates 64 objects simultaneously via one 64-bit word.

\begin{algorithm}[t]
\caption{BitPCE Per-Rule GPU Evaluation}
\label{alg:gpu_per_rule}
\begin{algorithmic}[1]
\Require Scope $r$, word index $w$, matrix $\mathbf{B}$, compiled offsets
\Ensure $\mathbf{O}[r,w]$: match result for 64 objects
\State $a \gets \texttt{0xFFFFFFFFFFFFFFFF}$ \Comment{All-ones accumulator}
\For{$g = o^R[r], \ldots, o^R[r+1]-1$}
  \State $u \gets 0$
  \For{$j = o^G[g], \ldots, o^G[g+1]-1$}
    \State $u \gets u \;|\; \mathbf{B}[\lambda[j] \cdot W + w]$
  \EndFor
  \State $a \gets a \;\&\; u$
  \If{$a = 0$} \textbf{break} \Comment{Early termination} \EndIf
\EndFor
\For{$j = o^E[r], \ldots, o^E[r+1]-1$}
  \State $a \gets a \;\&\; {\sim}\mathbf{B}[\epsilon[j] \cdot W + w]$
\EndFor
\If{$w = W - 1$} $a \gets a \;\&\; V[W{-}1]$ \EndIf
\State $\mathbf{O}[r,w] \gets a$
\end{algorithmic}
\end{algorithm}

When $a = 0$, no objects in the 64-element chunk can match, so remaining
groups are skipped.  Threads in one warp may stop at different groups,
causing divergence; we measure the effect via a no-early-exit ablation.

\subsection{Batch Kernel}

The CPU launches $\lceil RW/256 \rceil$ blocks of 256 GPU threads.
Each GPU thread derives its $(r,w)$ pair from the flattened index
$t=rW+w$ and executes Algorithm~\ref{alg:gpu_per_rule}. Each output
word has exactly one writer, so no atomics or reductions are needed.

\begin{algorithm}[t]
\caption{BitPCE CPU Launch and GPU Batch}
\label{alg:gpu_batch}
\begin{algorithmic}[1]
\Require Resident bitmaps and metadata for $R\geq1$ scopes
\Ensure $\mathbf{O}[R \times W]$: all match bitmaps
\State \textbf{CPU:} Launch $\lceil RW/256 \rceil$ blocks, 256 threads
\Statex \textbf{GPU (each thread):}
\State $t \gets \text{int64}(\texttt{blockIdx.x}) \cdot 256 + \texttt{threadIdx.x}$
\If{$t \geq \text{int64}(R)\cdot W$} \Return \EndIf
\State $r \gets \lfloor t / W \rfloor$; \quad $w \gets t \bmod W$
\State Execute Algorithm~\ref{alg:gpu_per_rule} for $(r, w)$
\end{algorithmic}
\end{algorithm}

\paragraph{Coalescing}
Within a warp segment assigned to one scope, consecutive threads
access consecutive 64-bit words of the same bitmap row. A full warp
requests 256 contiguous bytes; actual memory transactions depend on
alignment. Warps crossing a scope boundary may access different rows.  Output writes are
similarly coalesced in flattened task order.

\paragraph{Fusion vs.\ staging}
A two-pass GPU organization first materializes group unions, then
intersects and applies exclusions in a second GPU kernel.  Our fused kernel avoids intermediate
storage and the extra launch.  We measure both variants in the ablation
(Section~\ref{sec:ablation}).

\section{From Scope Results to Policy Updates}
\label{sec:integration}
An unchanged recipient can depend on a changed source. We specify
a CPU--GPU integration design and measure scope sharing, candidate
selection, and transfers separately. The candidate kernel consumes
materialized rule-side memberships; delivery is not implemented.
The rendering contract is incoming allow rules: an agent in $C_r$
permits addresses of workloads in $P_r$ for service $\sigma_r$;
outgoing enforcement swaps roles. Workload IDs survive IP changes.
The CPU aligns rules by stable ID; absent rules have empty memberships.

\subsection{Dependency-Aware Affected Workloads}
\label{sec:targets}
Incoming policy updates follow dependencies: changed source membership
can alter an unchanged recipient's policy. For rule $r$,
$P_r^o,P_r^n$ and $C_r^o,C_r^n$ denote the old/new source and
recipient sets; superscripts $o,n$ identify snapshots. Let $d_r$
flag a changed service descriptor or rule addition/deletion, and
$\triangle$ denote symmetric difference (bitmap XOR): membership
in exactly one set. The complete \emph{candidate} set is
\begin{equation}
\label{eq:affected}
A=\bigcup_r\left[(C_r^o\triangle C_r^n)\ \cup\
\begin{cases}
 C_r^o\cup C_r^n,&P_r^o\ne P_r^n\ \lor\ d_r,\\
 \varnothing,&\text{otherwise}.
\end{cases}\right].
\end{equation}

For example, consider ``allow web to db on TCP port 5432,'' with
$P_r^o=\{\texttt{web1}\}$ and
$C_r^o=\{\texttt{db1},\texttt{db2}\}$. Adding \texttt{web2} to
$P_r$ requires both database agents to permit its address, even if
$C_r^n=C_r^o$ and their membership XOR is empty. Separately, with
sources unchanged, replacing the consumer set by
$C_r^n=\{\texttt{db2},\texttt{db3}\}$ targets \texttt{db1} for
removal and \texttt{db3} for installation of this rule's permission.
Changing the service to TCP port 6432 targets all old/new recipients
while reusing the membership results.

The CPU supplies descriptor-change flags $d_r$. On the GPU,
provider-word comparisons reduce to one flag per rule, combined with
$d_r$ by logical OR. Each thread of the target kernel computes one 64-workload output word,
OR-ing consumer unions for flagged rules and consumer XORs otherwise.
Union across rules includes each candidate once in the $N$-bit target
bitmap. This takes $O(MW)$ word work for $M$ rules without materializing
provider--consumer pairs. CUDA copies return the target bitmap to the CPU.

\begin{proposition}[Candidate completeness]
Under the stated incoming-rule contract, with unchanged IP mappings
and rules aligned by stable ID, every workload whose rendered policy
changes belongs to $A$.
\end{proposition}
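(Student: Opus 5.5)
We prove the contrapositive: if a workload $x\notin A$, then its rendered policy is identical in the old and new snapshots. First we make the rendering contract precise as a function. Under the incoming-rule contract, the rendered policy of agent $x$ is the set of triples
\[
\Pi(x)=\{(r,\sigma_r,\mathrm{addr}(P_r)) : x\in C_r\},
\]
indexed by stable rule ID $r$. Here $\mathrm{addr}(P_r)$ is the set of addresses of the workloads in $P_r$. An absent rule contributes nothing, consistent with its empty membership. Because IP mappings are unchanged, $\mathrm{addr}(\cdot)$ is the same function in both snapshots. Hence $\mathrm{addr}(P_r^o)=\mathrm{addr}(P_r^n)$ whenever $P_r^o=P_r^n$. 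Rules are aligned by stable ID, so the old and new policies can be compared rule by rule. It therefore suffices to show, for each $r$ separately, that the $r$-entry of $\Pi^o(x)$ equals that of $\Pi^n(x)$, including the case where it is absent from both.

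Fix $r$ and assume $x\notin A$. Since $A$ is a union over $r$, $x$ avoids the bracketed term for this $r$; we split on whether that rule is flagged.

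\emph{Case 1: $P_r^o\neq P_r^n$ or $d_r$ holds.} Then $C_r^o\cup C_r^n\subseteq A$, so $x\notin C_r^o$ and $x\notin C_r^n$. Rule $r$ contributes no entry to $x$'s policy in either snapshot. This covers rule additions and deletions (the absent side has empty membership) and service changes.

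\emph{Case 2: $P_r^o=P_r^n$ and $\neg d_r$.} Then $x\notin C_r^o\triangle C_r^n$, so either $x$ lies in both recipient sets or in neither. If in neither, there is no entry in either snapshot. If in both, the entries are $(r,\sigma_r^o,\mathrm{addr}(P_r^o))$ and $(r,\sigma_r^n,\mathrm{addr}(P_r^n))$. Here $\sigma_r^o=\sigma_r^n$ because $\neg d_r$, and the address sets agree by the previous paragraph.

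Combining the two cases over all $r$ gives $\Pi^o(x)=\Pi^n(x)$, which proves completeness.

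The main obstacle is not the case analysis but pinning down the contract so that it covers everything. We must state that an agent's rendered policy depends only on the per-rule triples above, with no cross-rule merging or conflict resolution that could change $x$'s output through some other rule. That is consistent with the paper's model, which places conflict resolution out of scope. We must also state that $d_r$ captures every non-membership change to rule $r$, namely service, addition, and deletion. I would state both explicitly as hypotheses at the start of the proof.

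Finally, I would note that the result is an inclusion, not an equality. $A$ may contain workloads whose policy is unchanged, for example when $P_r$ changes in a way that a coarser rendering would not expose. This is why it is called a candidate set.
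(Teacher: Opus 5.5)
Your proof is correct and follows the same contrapositive argument as the paper. A workload outside $A$ has unchanged recipient membership for every rule, and any rule it receives must be unflagged, so its provider set and service (and hence, with fixed IP mappings, its rendered entries) are unchanged. Your per-rule triple formalization and explicit case split on the flag make precise what the paper's three-line proof leaves implicit.
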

\begin{proof}
A workload outside $A$ has unchanged consumer membership for every
rule. For every rule it receives, both provider membership and the
service descriptor are unchanged. With unchanged IP mappings, its
rendered entries are therefore unchanged. Taking the contrapositive
proves the claim.
\end{proof}

Candidates can be conservative: overlapping rules or equal rendered
address sets may produce no effective change. The CPU compares the
old/new canonical rendered policy for candidate agents before sending
an update. Re-rendering must include old recipients to remove stale
permissions. Rendering cost depends on the number of resulting
address/service entries, potentially proportional to
$\sum_r |C_r|\sum_{p\in P_r}|\mathrm{IP}(p)|$ for explicit per-agent
entries; bitmap resolution alone does not bound that cost.

\subsection{IP Churn without Scope Re-evaluation}
\label{sec:ip_churn}
For IP-only changes to a set $I$ of stable workload IDs, labels and
scope descriptors remain unchanged. The CPU updates its
$\mathrm{IP}(w)$ mapping, retaining multiple addresses per workload,
and reuses host-cached $P_r,C_r$. Candidate detection, policy rendering,
and agent communication stay on the CPU. With these cached memberships,
an IP-only update requires no reconstruction of $\mathbf{B}$, GPU
kernel launch, or PCIe transfer. The incoming-policy candidate set is
\begin{equation}
 A_{\mathrm{IP}}=\bigcup_{r:P_r\cap I\ne\varnothing} C_r.
\end{equation}
For one changed workload this needs $O(M)$ membership bit-tests plus
consumer-union and rendering work. Old addresses must be removed as
new addresses are installed. The changed workload's own agent may
also need local-interface configuration updates, separate from peer
policy resolution. IP-derived labels, joins, departures, and ID reuse
are not IP-only changes and follow the snapshot-update path. For
simultaneous scope and IP edits, union $A$ with IP-change candidates
from both snapshots before comparing rendered policies.

\paragraph{Concrete rendering example}
Consider ``allow \texttt{role:web} to \texttt{role:db} on TCP
port 5432.'' Suppose resolution yields $P_r=\{a\}$ and $C_r=\{b\}$,
with $\mathrm{IP}(a)=\{\texttt{10.0.1.10}\}$ and workload $b$ the
database host. The CPU renders this illustrative incoming rule for
$b$'s enforcement agent:
{\small
\begin{verbatim}
iptables -A INPUT -s 10.0.1.10/32 \
  -p tcp --dport 5432 -j ACCEPT
\end{verbatim}
}
The entry is part of an agent-managed ruleset whose chain placement
and ordering determine enforcement. If $a$'s address changes to
\texttt{10.0.1.11} while its labels remain unchanged, $P_r$ and $C_r$
remain valid. The CPU finds $a\in P_r$, targets $b\in C_r$, and
renders a new policy version replacing the old source-address entry
with one using \texttt{10.0.1.11/32}. No GPU scope evaluation is
needed. The agent applies the versioned ruleset update; it must remove
the old permission rather than merely append the new entry.

\subsection{PCIe, Snapshots, and Notification}
\label{sec:pcie_integration}
The CPU ingests events and retains authoritative labels and IP mappings.
For label edits, it merges changes to each word and prepares offset/value
patches. CUDA copies transfer these patches over PCIe; a GPU scatter
kernel writes the replacement words into resident $\mathbf{B}$. The CPU
allocates buffers, queues copies/kernels, and synchronizes completion.
Pinned buffers and asynchronous copies can reduce transfer overhead.
For joins, the CPU assigns stable IDs within reserved capacity; exhaustion
requires rebuilding and uploading an expanded index. In the integration
design, the CPU publishes a version only after GPU index writes and
resolution finish, using ordered CUDA events or separate version buffers.
Host IP mappings and cached memberships must use compatible versions.

After resolution, device-to-host copies return needed scope results
and targets; the CPU caches memberships and renders policy updates.
A proposed CPU delivery service notifies affected agents of a version
and pushes or serves its delta. Agents reject stale versions, apply
updates idempotently, and use full snapshots for recovery; heartbeats
check liveness. The CPU coalesces pending inputs by retaining the newest
snapshot, while scope deduplication shares computations within it.
Candidate fan-out, result export, rendering, and delivery remain
separate costs even for a small input patch.

\section{Performance Model}
\label{sec:perfmodel}

Each 64-bit input word requires roughly one OR or AND-NOT operation,
plus group-intersection work: the logical bitwise intensity is on the
order of 0.125--0.25 operations/byte, excluding address and control
instructions. Low intensity motivates bandwidth-oriented execution,
but cache residency, instruction costs, and launch overhead can limit
performance. A descriptive resource model separates these costs:
\begin{equation}
\widehat{T} = t_0 + \max\{D_{\text{DRAM}}/\beta,\; I/\pi\}
\end{equation}
where $t_0$ is dispatch overhead, $\beta$ is memory bandwidth, $\pi$ is
instruction throughput, $D_{\text{DRAM}}$ is actual DRAM traffic, and $I$
is executed instructions.  This is descriptive, not a calibrated
predictor: L2 cache reuse and early termination can reduce
$D_{\text{DRAM}}$ below the logical byte count.

\begin{proposition}[No-Reuse Time Estimate]
\label{prop:no_reuse}
For $R$ scope evaluations with $K$ dimensions and $W$ words per
bitmap, let $L_{\text{avg}}$ be the mean selected inclusion-label
count per dimension across the batch, counting wildcard dimensions
as zero. Assume no cache reuse
or early termination, one DRAM read per selected input word, and
one write per output word. The nominal bitmap traffic is
$8RW(KL_{\text{avg}}+1)$ bytes, giving the bandwidth-only estimate
$\widehat{T}_{\text{batch}}=8RW(KL_{\text{avg}}+1)/\beta$.
This excludes metadata, exclusions, instruction-throughput limits,
and launch overhead; it is not a bound on measured latency.
\end{proposition}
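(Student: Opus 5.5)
The plan is to count bytes per scope directly from Algorithm~\ref{alg:gpu_per_rule} and then sum over the batch. Under the stated assumptions there is no early termination, so every thread runs the full inclusion loop, and there is no cache reuse, so each selected input word read $\mathbf{B}[\lambda[j]\cdot W+w]$ becomes one 8-byte DRAM read. Metadata and exclusion reads are excluded from the count, as the statement allows. For a fixed scope $r$ with group sizes $g_k(r)$, where $g_k(r)=0$ for wildcard dimensions, the $W$ threads $(r,w)$ therefore read $8W\sum_{k=1}^K g_k(r)$ bytes. Each of these threads also writes exactly one output word $\mathbf{O}[r,w]$, which adds $8W$ bytes. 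Per scope, the traffic is $8W(\sum_k g_k(r)+1)$.

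Next I sum over the $R$ scopes. The total is $8W\sum_r\sum_k g_k(r)+8RW$. By definition $L_{\text{avg}}=\frac{1}{RK}\sum_r\sum_k g_k(r)$: it is the mean over all $RK$ scope--dimension slots, with wildcards counted as zero. Hence $\sum_r\sum_k g_k(r)=RKL_{\text{avg}}$, and the total becomes $8RW(KL_{\text{avg}}+1)$. Dividing by the bandwidth $\beta$ gives $\widehat{T}_{\text{batch}}$ in the bandwidth-only term of the descriptive model $\widehat{T}=t_0+\max\{D_{\text{DRAM}}/\beta,I/\pi\}$. This uses $t_0=0$ and drops the instruction term, which are exactly the stated exclusions of launch overhead and instruction limits.

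The main point that needs care is being explicit that this is an estimate and not a bound. Early termination and L2 reuse can make the actual $D_{\text{DRAM}}$ smaller, and the excluded terms (metadata, exclusions, $t_0$, $I/\pi$) can make the measured time larger. So neither direction of inequality holds for measured latency. I would also note the empty-group case: a group with $S_k=\emptyset$ contributes $g_k=0$ reads, which is consistent with the formula.
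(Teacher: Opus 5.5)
Your proposal is correct and uses the same accounting as the paper. The paper states it informally after Table~\ref{tab:roofline} rather than in a formal proof: each of the $RW$ scope--word pairs incurs $KL_{\text{avg}}$ eight-byte input reads plus one eight-byte output write, and the total is divided by $\beta$. Your explicit identity $\sum_r\sum_k g_k(r)=RKL_{\text{avg}}$ and your reading of the estimate as the $t_0=0$, bandwidth-only term of the resource model make precise what the paper leaves implicit.
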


\begin{table}[t]
\centering
\caption{Illustrative no-reuse estimate vs. measured batch time
(ms), A10G (600\,GB/s).  The model assumes $K{=}4$ and
$L_{\text{avg}}{=}2$; MB denotes $10^6$ bytes.
Ratio is estimate divided by measured time, computed before rounding.}
\label{tab:roofline}
\begin{tabular}{lrrrc}
\toprule
\textbf{Config} & \textbf{Bytes (MB)} & \textbf{Estimate} & \textbf{Measured} & \textbf{Ratio} \\
\midrule
100K, 500R & 56.3 & 0.094 & 0.048 & 1.96 \\
500K, 1KR & 562.5 & 0.938 & 0.345 & 2.71 \\
1M, 1KR & 1,125.1 & 1.875 & 1.584 & 1.18 \\
1M, 5KR & 5,625.4 & 9.376 & 7.904 & 1.19 \\
\bottomrule
\end{tabular}
\end{table}

Estimates use actual population sizes (including final-word padding),
count eight input-word reads and one output-word write per
scope--word pair, and exclude metadata, exclusions, and launch
overhead.  Ratio~$>$~1 indicates measured time is faster than the
no-reuse scenario; cache reuse and early termination can both
contribute, but these ratios alone cannot distinguish the effects
without hardware-counter profiling.

\section{Evaluation}
\label{sec:eval}

Matched baselines test resolution, ablations test execution choices,
and scope/update experiments measure reuse and integration costs
under Section~\ref{sec:integration}'s contract.

\subsection{Experimental Setup}

\paragraph{Hardware}
Primary and Redis campaigns use an A10G/g5.2xlarge (9{,}216 CUDA
cores, 24\,GB GDDR6, 600\,GB/s; 8 logical CPUs, 4 physical cores),
with CuPy 14.2.0 and 13.3.0 respectively. The component campaign
uses the same hardware type (Section~\ref{sec:integration_eval}).
Additional native-CRoaring and GPU-reduction controls use a separate
A10G/g5.xlarge (4 logical CPUs, 2 physical cores), GCC~11 and
CuPy~13.3.0; their timings are reported separately.

\paragraph{Baselines}
Four CPU baselines receive identical inputs identified by SHA-256
fingerprint; Redis executes set operations in its server process:
\begin{itemize}[leftmargin=*]
\item \textbf{Native C++ bitmap} with OpenMP (1/2/4/8 workers): optimized
  compilation, same dense representation.  The extended campaign also tests native sorted-vector and
  selective-hash implementations; ``CPU best'' selects the fastest
  implementation and worker count per seed, as specified per table.
\item \textbf{Redis 8.10.2}: server-side Lua performing group unions,
  intersections, and exclusion differences.  A selective variant
  uses \texttt{SMISMEMBER} batches to filter candidates.
  Loopback, persistence disabled.
\item \textbf{PyRoaring}: compressed bitmap set operations via native
  Roaring library with Python orchestration.
\item \textbf{Python hash sets}: reference set union, intersection,
  and difference using Python's \texttt{set} objects.
\end{itemize}

\paragraph{Workloads}
Defaults are $K=4$ dimensions and $C=20$ labels/dimension, varied
as listed. Each dimension is wildcard with probability 0.2; included
dimensions select 1--$\min(5,C)$ labels. Scopes also exclude 0--2 labels.
Objects have at most one label per dimension, with 2\% missing attributes.
Uniform, skewed, and cross-dimension-correlated inputs are tested.
Population labels are rounded; actual $N$ values are 10{,}003,
100{,}003, 500{,}003, and 1{,}000{,}003.

\paragraph{Protocol}
After two warmups, primary timings use seven repetitions (three for
Python sets); extended Redis/native/compaction timings use five.
Wall-clock timings synchronize GPU completion. Each configuration has
three input seeds; tables report medians of seed medians, and ratios
use unrounded aggregates. Primary/Redis batches count raw scopes $R$,
without cross-rule deduplication, rather than directional rules.
Index construction and CPU scope compilation are excluded. Resident
GPU inputs are preloaded; designated upload/round-trip paths include
input transfers. Section~\ref{sec:integration_eval} separately measures
scope-compilation setup.

\subsection{Primary Results: GPU vs.\ Native CPU}

\begin{table}[t]
\centering
\footnotesize
\setlength{\tabcolsep}{3pt}
\caption{Matched batch latency (ms), resident execution.  CPU~8 is
native C++ bitmap with 8~OpenMP workers.  GPU is resident (data on
device).  All baselines receive identical inputs per configuration.
$*$~denotes skewed distribution; speedups use unrounded medians.}
\label{tab:primary}
\begin{tabular}{@{}rrcrrrr@{}}
\toprule
$N$(k) & $R$ & $K/C$ & CPU 8 & Roaring & GPU & Speedup \\
\midrule
10 & 100 & 4/20 & 0.077 & 2.83 & 0.015 & 5.0$\times$ \\
100 & 500 & 4/20 & 2.37 & 41.0 & 0.048 & 49.5$\times$ \\
500 & 1K & 4/20 & 22.8 & 366 & 0.345 & 66.0$\times$ \\
1{,}000 & 1K & 4/20 & 45.9 & 751 & 1.584 & 29.0$\times$ \\
1{,}000 & 5K & 4/20 & 227 & 3{,}748 & 7.90 & 28.7$\times$ \\
100 & 500 & 4/20* & 2.18 & 29.0 & 0.047 & 46.6$\times$ \\
100 & 500 & 8/20 & 2.60 & 62.9 & 0.058 & 45.2$\times$ \\
100 & 500 & 4/200 & 1.75 & 15.7 & 0.070 & 24.8$\times$ \\
\bottomrule
\end{tabular}
\end{table}

Table~\ref{tab:primary} shows 5.0--66.0$\times$ resident speedup
over eight-worker dense CPU, moderating to ${\sim}$29$\times$ at
1M objects. Timing alone does not establish the cause. At 200
labels/dimension, speedup is 24.8$\times$: bitmap length is unchanged,
but sparsity affects selectivity and early termination. PyRoaring is
slower here, including Python orchestration of native operations.

\paragraph{Native compressed control}
On the separate four-vCPU host, CRoaring~5.2.2 uses prebuilt indexes,
native IoU loops, two group-ordering strategies, and 1/2/4 workers.
We test compressed output and dense conversion, alongside native
dense/sorted/hash controls. Across the eight configurations of
Table~\ref{tab:redis} (three matched seeds, seven shuffled repetitions),
resident GPU is 3.46--36.60$\times$ faster than the fastest tested
CPU dense-output path, including CRoaring. With full upload and dense
download, GPU is 35.52$\times$ slower on the sparsest case
(14.02 vs.\ 0.395\,ms). Native compressed results and all samples are
provided with the artifact; no cross-host ratios are used.

\subsection{Redis Comparison}

\begin{table}[t]
\centering
\footnotesize
\setlength{\tabcolsep}{3pt}
\caption{Extended comparison with Redis~8.10.2 (ms), matched
configurations.  CPU~best selects the fastest native
bitmap/sorted-vector/selective-hash over 1--8 workers per seed.
U:~uniform, S:~skewed, C:~correlated.}
\label{tab:redis}
\begin{tabular}{@{}rrrcrrrr@{}}
\toprule
$N$(k) & $R$ & $C$ & Dist & Redis & CPU best & Roaring & GPU \\
\midrule
10 & 100 & 20 & U & 89.9 & 0.063 & 2.49 & 0.021 \\
100 & 500 & 20 & U & 5{,}265 & 2.37 & 40.8 & 0.053 \\
500 & 128 & 20 & U & 6{,}965 & 3.11 & 47.9 & 0.064 \\
100 & 500 & 200 & U & 434 & 1.69 & 20.9 & 0.077 \\
100 & 128 & 4 & U & 6{,}549 & 0.579 & 5.23 & 0.026 \\
100 & 128 & 2000 & U & 18.2 & 0.166 & 2.26 & 0.039 \\
100 & 128 & 20 & S & 1{,}297 & 0.591 & 6.58 & 0.027 \\
100 & 128 & 20 & C & 1{,}342 & 0.507 & 9.31 & 0.026 \\
\bottomrule
\end{tabular}
\end{table}

Table~\ref{tab:redis} reports uncached Redis~8.10.2 Lua union IoU.
Figure~\ref{fig:redis} includes selective execution, faster on skewed
and sparsest inputs, and cached group unions (construction excluded).
Union latency spans 18--6{,}965\,ms. Ratios of aggregated medians are
463--255{,}000$\times$ against resident GPU and 110--11{,}300$\times$
against CPU~best. These gaps combine representation, implementation,
command dispatch, object management, and Lua interpreter costs;
they do not isolate GPU hardware acceleration.

\begin{figure}[t]
\centering
\includegraphics[width=\columnwidth]{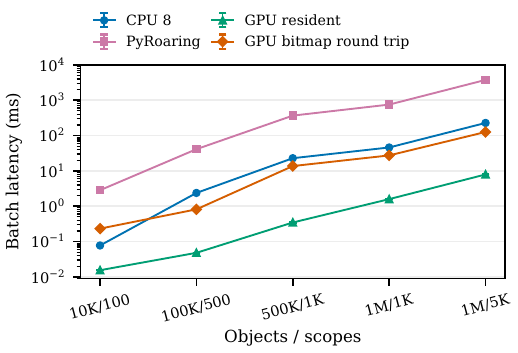}
\caption{Uniform primary cases, $K=4,C=20$ (log scale). Whiskers span
the three input-seed medians. Round trip includes index/metadata upload,
kernel execution, and dense bitmap download, without identifier compaction.}
\label{fig:latency}
\end{figure}

\subsection{Transfer-Inclusive Measurements}
\label{sec:transfer}

\begin{table}[t]
\centering
\footnotesize
\setlength{\tabcolsep}{3pt}
\caption{Host-readable identifier output (ms), matched configurations
from the extended comparison.  GPU~resident includes resolution,
compaction, and export with preloaded inputs; GPU~upload adds input
transfer.  CPU~best selects the fastest native decode or direct
selective-hash path per seed.}
\label{tab:transfer}
\begin{tabular}{@{}rrcrrrr@{}}
\toprule
$N$(k) & $R$ & $C$ & Dist & GPU resid. & GPU upload & CPU best \\
\midrule
10 & 100 & 20 & U & 0.524 & 0.687 & 0.317 \\
100 & 500 & 20 & U & 15.6 & 16.3 & 19.3 \\
500 & 128 & 20 & U & 17.3 & 18.7 & 22.3 \\
100 & 500 & 200 & U & 1.04 & 3.04 & 3.13 \\
100 & 128 & 2000 & U & 0.351 & 13.6 & 0.482 \\
100 & 128 & 20 & S & 2.79 & 3.27 & 3.51 \\
\bottomrule
\end{tabular}
\end{table}

Table~\ref{tab:transfer} includes resolution, GPU popcount/prefix-scan,
set-bit emission, and ID/offset export. CPU coordination reads the
output count for GPU allocation and converts returned arrays to
Python lists; both costs are included.

GPU with upload narrowly beats CPU at 100K/500 scopes
(16.3/19.3\,ms) and 500K/128 scopes (18.7/22.3\,ms). At 2{,}000
labels/dimension, the full GPU path is 28$\times$ slower
(13.6/0.48\,ms), versus 0.351\,ms with preloaded inputs.
Transfer costs can determine the preferred strategy.

\begin{figure*}[t]
\centering
\includegraphics[width=\textwidth]{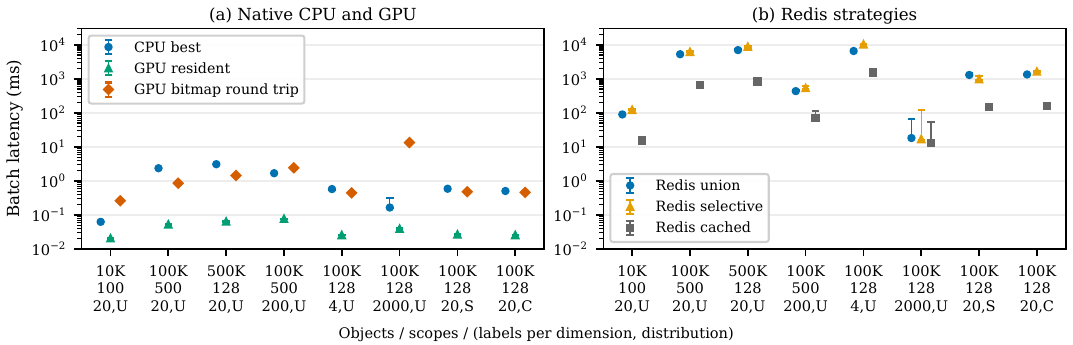}
\caption{Extended-campaign latencies on a shared log scale; whiskers
span the three input-seed medians. GPU round trip returns dense bitmaps.
Redis cached reuses group unions, excluding their construction cost.}
\label{fig:redis}
\end{figure*}

\subsection{GPU Execution Ablation}
\label{sec:ablation}

\begin{table}[t]
\centering
\caption{Resident GPU ablation (ms).  Fused: single-pass kernel.
Full: fused without early termination.  Staged: separate union
materialization and intersection passes.}
\label{tab:ablation}
\begin{tabular}{rrrrrr}
\toprule
$N$(k) & $R$ & $K$ & Fused & Full & Staged \\
\midrule
100 & 500 & 4 & 0.047 & 0.049 & 0.117 \\
500 & 1K & 4 & 0.348 & 0.351 & 1.560 \\
1{,}000 & 5K & 4 & 7.889 & 8.181 & 19.041 \\
100 & 500 & 8 & 0.059 & 0.076 & 0.177 \\
\bottomrule
\end{tabular}
\end{table}

The fused kernel is 2.4--4.5$\times$ faster than staged execution,
which materializes one intermediate bitmap per group and requires two
kernel launches.  Early termination reduces time by 0.8--22.1\%
across these configurations.

\paragraph{Operand-parallel alternative}
Our adaptation of Nelson et al.'s ideal-hybrid mapping
\cite{nelson2020} assigns threads to consecutive words and pairs of
operands, reduces each group in shared memory, then intersects groups
and applies exclusions. Both kernels use resident dense inputs and
write the same bitmap; this isolates execution layout, excluding the
prior system's WAH codec. We test 128/256/512/1024-thread blocks and
reduction with/without early exit. Table~\ref{tab:reduction} compares
the fixed 256-thread BitPCE kernel with the fastest reduction per
seed: BitPCE is 1.14--2.22$\times$ faster for these small groups
(at most five labels). These results support word-local accumulation
in this regime; they do not compare complete prior query engines or
establish the best mapping for large operand groups.

\begin{table}[t]
\centering
\footnotesize
\setlength{\tabcolsep}{3pt}
\caption{Separate g5.xlarge GPU-only controls (ms), three matched
seeds and seven shuffled repetitions. Fused uses 256 threads;
Reduce selects the fastest tested operand-parallel variant per seed;
ratio is Reduce/Fused. U/S/C denote uniform/skewed/correlated inputs.}
\label{tab:reduction}
\begin{tabular}{@{}rrrcrrr@{}}
\toprule
$N$(k) & $R$ & $C$ & Dist & Fused & Reduce & Ratio \\
\midrule
10 & 100 & 20 & U & 0.0188 & 0.0215 & 1.14 \\
100 & 500 & 20 & U & 0.0511 & 0.1119 & 2.19 \\
500 & 128 & 20 & U & 0.0616 & 0.1370 & 2.22 \\
100 & 500 & 200 & U & 0.0715 & 0.1136 & 1.59 \\
100 & 128 & 4 & U & 0.0225 & 0.0301 & 1.34 \\
100 & 128 & 2000 & U & 0.0360 & 0.0455 & 1.26 \\
100 & 128 & 20 & S & 0.0245 & 0.0411 & 1.67 \\
100 & 128 & 20 & C & 0.0232 & 0.0378 & 1.63 \\
\bottomrule
\end{tabular}
\end{table}

\subsection{Batch-Size Crossover}

\begin{table}[t]
\centering
\caption{Fixed-universe latency (ms), $N{=}100$K.  Batch size is
scope count.}
\label{tab:batching}
\begin{tabular}{rrrr}
\toprule
Scopes & CPU 1-worker & CPU 8-worker & GPU resident \\
\midrule
1 & 0.031 & 0.047 & 0.017 \\
8 & 0.216 & 0.077 & 0.019 \\
32 & 0.806 & 0.201 & 0.020 \\
512 & 12.6 & 2.41 & 0.052 \\
8{,}192 & 199 & 37.7 & 0.540 \\
\bottomrule
\end{tabular}
\end{table}

At one scope the CPU parallel loop has one iteration; eight workers
become useful as the batch grows. Resident GPU time remains nearly
flat from one to eight scopes, then grows sublinearly over the tested
range. Upload can favor CPU for small batches.

\subsection{Label Group Factoring}
Factoring groups of sizes 20, 10, and 15 uses $45W$ OR operations
instead of enumerating 3{,}000 combinations
(Theorem~\ref{thm:group_elim}). Hash sets also support factoring.
Cached unions remove per-label work but require refresh; intersection
and exclusions remain.

\subsection{Memory Footprint}

At 1M~objects with 80~labels ($4 \times 20$), the label matrix
occupies $80 \times \lceil 10^6 / 64 \rceil \times 8 = 9.5$\,MiB.
Provider and consumer scopes reference the same label index.
The output matrix for $R$~scopes adds $R \times W \times 8$ bytes
(119\,MiB at $R{=}1{,}000$); metadata, optional cached group
unions, and scratch buffers are additional.  The combined
input-plus-output bitmap payload is ${\sim}$129\,MiB---well within
the 8\,GB of a commodity GPU.

\subsection{Correctness Validation}

GPU resolution outputs match native CPU bitmaps word-for-word.
An independent Boolean predicate checks prefixes and edge cases;
dedicated suites cover empty groups, wildcards, exclusion-only scopes,
missing labels, and partial words. Redis, native sparse, and Roaring
outputs match reference sets. New native and GPU-reduction edge
suites pass 7,440 and 3,024 scope comparisons, respectively; every
timed configuration matches all reference words before and after
timing. Update components use the rendering and patch checks below.

\subsection{Scope Reuse and Update Components}
\label{sec:integration_eval}
A separate A10G/g5.2xlarge component campaign uses CuPy 14.2 and
CUDA 12.9 runtime (full environment recorded), three seeds, two warmups, and
seven synchronized wall-clock samples. We report medians of per-seed
medians separately from the primary/Redis campaign.

\paragraph{Scope reuse}
We test unique fractions 1, 0.5, 0.1, and 0.01 at $(N,R)=$
(100K,1K), (100K,10K), and (500K,2K): 36 cases, using synthetic
duplicate distributions. Both paths use identical canonical group
ordering; raw execution evaluates every occurrence.
Table~\ref{tab:dedup_new} reports 10K scopes. CPU paths each choose
the faster of one/eight OpenMP workers. GPU-shared retains $U$ rows
plus references; GPU-expand gathers $R$ rows. Input bitmaps and
compiled maps are resident.

\begin{table}[t]
\centering\footnotesize
\setlength{\tabcolsep}{3pt}
\caption{Scope reuse at 100K objects and 10K raw scope occurrences.
Times are ms; $U$ is the median unique count. CPU receives the same scope-reuse optimization.}
\label{tab:dedup_new}
\begin{tabular}{rrrrrr}
\toprule
$U$ & CPU raw & CPU shared & GPU raw & GPU shared & GPU expand \\
\midrule
9,994 & 47.84 & 47.776 & 0.652 & 0.651 & 1.168 \\
5,000 & 47.88 & 23.916 & 0.651 & 0.335 & 0.850 \\
1,000 & 48.04 & 4.873 & 0.655 & 0.082 & 0.559 \\
100 & 47.87 & 0.535 & 0.658 & 0.023 & 0.304 \\
\bottomrule
\end{tabular}
\end{table}

At $U=100$, cached GPU evaluation falls from 0.658 to 0.023\,ms
(0.304\,ms with raw-row materialization). Shared output shrinks from
125.04\,MB to 1.25\,MB plus a 40\,KB map. Native CPU time also
falls from 47.87 to 0.535\,ms: both processors benefit from reuse.

Setup must be amortized. At $U=100$, native map construction takes
4.53\,ms from flat metadata (12.63\,ms including flattening), versus
35.28\,ms for Python from nested scopes. The cold native-deduplicated
GPU path (flatten, map, compile unique scopes, upload metadata,
resolve, expand) takes 13.53\,ms versus 9.33\,ms without reuse.
Cached maps therefore benefit recurring label updates but may not
benefit one-shot compilation. All 36 cases match raw CPU/GPU outputs;
530 additional native compiler cases check grouping, wildcards,
empty groups, exclusions, and word boundaries.

\paragraph{Affected agents and IP churn}
At 100K objects/1K directional rules, we modify 1, 10, or 100
provider memberships, remove one consumer scope, and change one
service descriptor. Both paths time provider-change detection and
target reduction; GPU timing also includes export of the 12{,}504-byte
target bitmap. Scope resolution, rendering, and agent transport are excluded.
Table~\ref{tab:targets_new} compares optimized native C++ using the
best of one/eight workers. Recipient counts show that few edits can
produce large fan-out.

\begin{table}[t]
\centering
\caption{Candidate detection and target-bitmap export (ms), 100K
objects/1K directional rules. Counts are medians across three seeds.}
\label{tab:targets_new}
\begin{tabular}{rrrr}
\toprule
Provider edits & Candidates & Native CPU & GPU+export \\
\midrule
1 & 668 & 2.212 & 0.247 \\
10 & 2,110 & 2.194 & 0.249 \\
100 & 73,521 & 2.182 & 0.248 \\
\bottomrule
\end{tabular}
\end{table}

The CPU-only IP-change prototype scans cached memberships and unions
candidates in 0.024--0.027\,ms for one changed workload, excluding
rendering/delivery. An independent set-valued renderer validates
completeness across 300 random transitions with removals, service
edits, and multiple IPs; plain membership-XOR misses recipients in
238 cases. The scope/notification suite passes 3{,}505 checks.

\paragraph{Incremental PCIe updates}
Preallocated pinned CPU buffers hold unique 64-bit word offsets and
replacement words (16 bytes/record). Timing covers CUDA host-to-device
copies, GPU scatter, and CPU synchronization; allocation and CPU
patch preparation are excluded. Table~\ref{tab:pcie_new} compares this with a
full 80-label index upload. The nearly flat small-patch latency shows
why payload size divided by link bandwidth is not an end-to-end
latency estimate. These component times do not include a subsequent
resolve, target detection, result export, rendering, or delivery.

\begin{table}[t]
\centering
\caption{Pinned PCIe update time ($\mu$s), including dispatch and
synchronization. Patches transfer 16 B, 1 KiB, or 16 KiB; full index
sizes are approximately 1 MB and 10 MB. Full-upload trials are first
aggregated within each seed, then across seeds.}
\label{tab:pcie_new}
\begin{tabular}{rrrrr}
\toprule
Objects & 1 word & 64 words & 1,024 words & Full index \\
\midrule
100K & 26.9 & 26.7 & 28.7 & 86.8 \\
1M & 26.4 & 25.4 & 27.0 & 755.8 \\
\bottomrule
\end{tabular}
\end{table}

\section{Discussion}
\label{sec:discussion}

Batch size, density, and output determine the preferred path. Host
identifier export narrows resident gains (Table~\ref{tab:transfer}).
For a fresh full-index upload, the upload-only budget is
$L\cdot W\cdot8<\beta_{\text{PCIe}}\cdot T_{\text{CPU}}$.
Illustrative $\beta_{\text{PCIe}}=12$\,GB/s and
$T_{\text{CPU}}=2$\,ms accommodate ${\sim}$1{,}920 labels at 100K
objects; the full crossover also includes execution and output.

BitPCE specializes established bitmap, fusion, and reuse techniques
to categorical IoU and its update contract. Synthetic inputs and one
GPU platform limit generalization. Production traces, wider groups,
other compressed indexes, and complete update-to-enforcement
measurements remain open.

\section{Conclusion}
\label{sec:conclusion}

BitPCE combines fused word evaluation, canonical scope sharing, and
dependency-complete update candidates for heterogeneous policy scopes.
Correctness arguments cover wildcard, empty-group, exclusion, and
update semantics under the stated rendering contract.

The operand-parallel comparison supports word-local execution for
small groups. Upload and host-readable output narrow GPU/CPU times
to 16.3/19.3\,ms at 100K objects/500 scopes; sparse cases can favor
CPU. The results identify useful regimes for the resolver and update
components.


\bibliographystyle{IEEEtran}

\end{document}